\renewcommand{\qed}{\hfill \ensuremath{\Box}}
\def\tb{\rhd} 
\title{New Results on Equilibria in Strategic Candidacy}
\author{J\'erome Lang\inst{1} \and Nicolas Maudet\inst{2} \and Maria Polukarov\inst{3} \and Alice Cohen-Hadria\inst{2}
\institute{LAMSADE, Universit\'e Paris-Dauphine, Paris, France\\  \email{lang@lamsade.dauphine.fr}  \and LIP6, Universit\'e Pierre et Marie Curie, Paris, France\\  \email{nicolas.maudet@lip6.fr }  \and University of Southampton, United Kingdom\\ \email{mp3@ecs.soton.ac.uk}
}
}
\begin{document}

\maketitle


\begin{abstract}
We consider a voting setting where candidates have preferences about the outcome of the election and are free to join or leave the election. The corresponding candidacy game, where candidates choose strategically to participate or not, has been studied 
by Dutta et al.~\cite{DuttaElECONOMETRICA2001}, who showed that no non-dictatorial voting procedure satisfying unanimity is candidacy-strategyproof, that is, is such that the joint action where all candidates enter the election is always a pure strategy Nash equilibrium. In \cite{DuttaElJET2002} Dutta et al. also showed that for some voting tree procedures, there are candidacy games with no pure Nash equilibria, and that for the rule that outputs the sophisticated winner of voting by successive elimination, all games have a pure Nash equilibrium.
No results were known about other voting rules. Here we prove several such results. 
For four candidates, the message is, roughly, that most scoring rules (with the exception of Borda)  do  not guarantee the existence of a pure Nash equilibrium but that Condorcet-consistent rules, for an odd number of voters, do. 
For five candidates,  most rules we study no longer have this guarantee.  Finally, we identify one prominent rule that guarantees the existence of a pure Nash equilibrium for any number of candidates (and for an odd number of voters): the Copeland rule.  We also show that under mild assumptions on the voting rule, the existence of strong equilibria cannot  be guaranteed. 
\end{abstract}



\section{Introduction}

A main issue for the evaluation of voting rules is their ability to resist various sorts of strategic behavior.
Strategic behavior can come from the voters reporting insincere votes ({\em manipulation}); from a third party, typically the chair, acting on the set of voters or candidates ({\em control}), on the votes ({\em bribery}, {\em lobbying}), or on the voting rule ({\em e.g., agenda control}).
However, strategic behavior by the {\em candidates} has received less attention 
than strategic behavior by the voters and (to a lesser extent) by the chair. One form thereof involves choosing optimal political platforms. But probably the simplest form {\em comes from the ability of candidates to decide whether to run for the election or not}, which is the issue we address here. The following table summarizes this rough classification of strategic behavior in voting, according to the identity of the strategizing agent(s) and 
to another relevant dimension, namely what the strategic actions bear on---voters, votes or candidates (we omit the agenda to keep the table small). 

$$
\begin{tabular}{|c||c|c|c|}
\hline 
\begin{tabular}{c} {\bf actions} $\rightarrow$  \\ {\bf agents} $\downarrow$ \end{tabular} & {\em voters} & {\em votes} & {\em candidates} \\ \hline  \hline
{\em voters} & \begin{tabular}{c} strategic participation \end{tabular} & manipulation & - \\ \hline
{\em third party / chair} & voter control & \begin{tabular}{c} bribery, lobbying \end{tabular} & \begin{tabular}{c} candidate  control, \\ cloning  \end{tabular}\\ \hline
{\em candidates} & - & - & \begin{tabular}{c} {\em strategic} {\em candidacy} \end{tabular} \\ \hline
\end{tabular}
$$
Strategic candidacy does happen frequently in real-life elections, both in large-scale political elections and in small-scale, low-stake elections ({\em e.g.}, electing a chair in a research group, 
or---moving a little bit away from elections---reputation systems). Throughout the paper we consider a finite set of {\em potential candidates}, which we simply call {\em candidates} when this is not ambiguous, and we make the following assumptions:

\begin{enumerate}
\item each candidate may choose to run or not for the election;
\item each candidate has a preference ranking over candidates;
\item each candidate ranks himself on top of his ranking; 
\item the candidates' preferences are common knowledge among them; 
\item the outcome of the election as a function of the set of candidates who choose to run is common knowledge among the candidates.
\end{enumerate}

With the exception of 3, these assumptions were also made in the original model of Dutta et al.~\cite{DuttaElECONOMETRICA2001} which we discuss below. Assumption 2 amounts to saying that a candidate is interested only in the winner of the election\footnote{In some contexts, candidates may have more refined preferences that bear for instance on the number of votes they get, how their score compares to that of other candidates, 
etc. We do not consider these issues here.} and has no indifferences or incomparabilities. Assumption 3 (considered as optional in~\cite{DuttaElECONOMETRICA2001}) is a natural domain restriction in most contexts. 
Assumptions 4 and 5 are common game-theoretic assumptions; 
note that we do not have to assume that the candidates know precisely how voters will vote, nor even the number of voters; they just have to know the choice function mapping every subset of candidates to a winner. Assumption 4 is required only when strong Nash equilibria are considered. 

Existing work on strategic candidacy is rather scarce. 
Dutta {\em et al.} \cite{DuttaElECONOMETRICA2001,DuttaElJET2002} formulate the strategic candidacy game and prove that no non-dictatorial voting procedure satisfying unanimity is candidacy-strategyproof (or equivalently, that for any non-dictatorial voting procedure satisfying unanimity, there is a profile for which the joint action where all candidates enter the election is not a pure Nash equilibrium). Then, Dutta {\em et al.} \cite{DuttaElJET2002} exhibit a (non-anonymous) voting tree rule 
for four candidates for which there is a candidacy game with no pure Nash equilibria. They also show that for the voting rule that outputs the sophisticated outcome for voting by successive elimination, the existence of a pure Nash equilibrium is guaranteed.
Some of these results are discussed further (together with simpler proofs) by Ehlers and Weymark \cite{EhlersWeymark03}, and extended to voting correspondences by Ereslan \cite{Ereslan04} and Rodriguez \cite{Rodriguez06a}, and to probabilistic voting rules by Rodriguez \cite{Rodriguez06b}. Brill and Conitzer \cite{BrillConitzerAAAI2015} extend the analysis to also include strategic behavior by the voters. Polukarov {\em et al.} \cite{PolukarovOR15} study equilibrium dynamics in candidacy games, in which candidates may strategically decide to enter the election or withdraw their candidacy. Obraztsova {\em et al.} \cite{ObraztsovaEMR15} study strategic candidacy games with lazy candidates, whose utility function results form the outcome of the election minus a small penalty for running for election.

Studying the equilibria of a candidacy game helps predicting the set of actual candidates and therefore the outcome of the vote.
However, little is known about this: we only know that for any reasonable voting rule, there are some candidacy games for which the set of all candidates is not a pure Nash equilibrium, that there exist candidacy games with no pure Nash equilibria, and that a specific rule, defined from the successive elimination procedure and assuming that voters reason by backward induction, all candidacy games have a pure Nash equilibrium. We do not know, for instance,
whether pure Nash equilibria always exist 
for common voting rules such as plurality, Borda or Copeland.

In this paper, we go further in this direction and prove some positive as well as some negative results. We first consider the case of {\em four} candidates and show that for an odd number of voters, a pure Nash equilibrium always exists for Condorcet-consistent rules, while for most scoring rules and as well as for single transferable vote and plurality with runoff, this is not the case. 
Over the {\em five} candidate frontier, we know very few rules that,
can still guarantee the existence of an equilibrium. 
We show that for the
Copeland rule, and an odd number of voters, there is always a pure Nash equilibrium, whichever the number of candidates. On the negative side, we show that for most scoring rules, for at least four candidates, and for Borda, maximin, and the uncovered set for at least five candidates, and for the top cycle with at least seven candidates, there are candidacy games without pure Nash equilibria.  We also prove a simple impossibility theorem showing that {\em strong} Nash equilibria are not guaranteed to exist provided the voting rule satisfies two mild conditions satisfied by most common rules.


The paper unfolds as follows. In Section \ref{sec:model} we define the strategic candidacy games and give a few preliminary results. In Section \ref{sec:4candidates}  we focus on the case of four candidates. The case of five candidates is considered in Section \ref{sec:5candidates}. Section \ref{sec:more} deals with candidacy games with more candidates.  
In Section \ref{sec:control} we discuss strong Nash equilibria,
and relate the candidacy game to candidate control. Finally, in Section \ref{sec:further} we discuss further issues.



\section{Model and preliminaries}\label{sec:model}
In this section, we define the  strategic candidacy model, show that it induces a normal form game, and give preliminary results on the existence of Nash equilibria.


\subsection{Voting rules}\label{subsec:rules}

Let $X = \{x_1, x_2, \dots x_m\}$  be a set of \emph{potential} candidates and $N = \{1, 2, \dots n\}$ a set of $n$ voters.
{\em We assume $n$ is odd}, so that pairwise majority ties do not occur. While this is a mild assumption when the number of voters is large, this implies a loss of generality for some of our results, and when this is the case we will make it clear.


For any subset $Y \subseteq X$ of the candidates, a $Y$-vote is a linear ordering over $Y$. A $Y$-profile $P = \langle \succ_1, \ldots, \succ_n\rangle$ is a collection of $n$ $Y$-votes. Although voting rules are often defined for a fixed set of candidates, here we define them for an arbitrary subset of the set of potential candidates: a (resolute) {\em voting rule} maps every $Y$-profile, for every $Y \subseteq X$, to a candidate in $Y$.
We will only consider resolute rules; we will first define their irresolute version and then assume that ties are broken up according to a fixed priority relation over the candidates.  Because voting rules are applied to varying sets of candidates, we assume that the tie-breaking rule is defined as a linear ordering on the whole set of potential candidates $X$, and projected to subsets of candidates: if $x$ has priority over $y$ (noted $x \rhd y$) when all potential candidates run, this will still be the case for any set of candidates that contains $x$ and $y$. 

We now define the rules we will use in the paper. (For each of them we define its irresolute version, its resolute version being obtained as explained above.)

\paragraph{Scoring rules.}
A \emph{scoring rule} (for a varying set of candidates) is defined by a collection of vectors $\vec{S}_m = \langle s_1, \dots s_m \rangle$ for all $m$, with $s_1 \geq s_2 \geq \ldots \geq s_m$ and $s_1 > s_m$. For each $m$ and each $i \leq m$, $s_i$ is the number of points obtained by the candidate ranked in position $i$, and the winning candidate(s) maximizes the sum of points obtained from all votes. Formally speaking, defining a family of scoring rules requires to specify a scoring vector for each size of a candidate set (for instance, $\langle 3, 1, 0 \rangle$ for three candidates, $\langle 4, 3, 2,0 \rangle$ for four candidates and so on). However, for the following classical rules, these collections of vectors are defined in a natural way: 
\begin{itemize}
\item {\em plurality}: $\vec{S}_m = \langle 1, 0, \dots 0 \rangle$;
\item {\em veto} (or {\em antiplurality}): $\vec{S}_m = \langle 1, \dots 1, 0 \rangle$;
\item {\em Borda}: $\vec{S}_m = \langle m-1, m-2, \dots 1, 0 \rangle$. 
\end{itemize}

\paragraph{Condorcet-consistent rules.}
Let $P$ be a profile and $N_P(c,x)$ be the number of votes in $P$ who rank $c$ above $x$. The majority graph $m(P)$ associated with $P$ is the graph whose vertices are the candidates and containing an edge from $x$ to $y$ whenever $N_P(x,y) > \frac{n}{2}$ (we say that $x$ {\em beats} $y$ in $m(P)$, denoted by $x \to_P y$). 
Because $n$ is odd, $m(P)$ is a tournament, {\em i.e}, a complete asymmetric graph.
A candidate $c$ is a \emph{Condorcet winner} if $c \to_P y$ for all $y \neq c$. A voting rule $r$ is {\em Condorcet-consistent} if $r(P) = \{c\}$ whenever there is a (unique) Condorcet winner $c$ for $P$.

Given a profile $P$, the \emph{top cycle} $TC(P)$ 
is the smallest $S \subseteq X$ such that  for every $x \in S$ and $y \in X \setminus S$, $x \to_P y$ . 
The \emph{uncovered set} $UC(P)$ is the set $S \subseteq X$ of candidates such that for any  $c \in S$ and for any other candidate $x$, if $x \to_P c$ then there is some $y$ such that $c \to_P y$ and $y \to_P x$.  
The \emph{maximin} rule chooses the candidate(s) $c$ that maximize $\min_{x\in X\setminus\{c\}} N_P(c,x)$.
The {\em Copeland} rule chooses the candidate(s) $c$ that maximize 
$| \{x \in X | c \to_P x \}|$.

\paragraph{Rules based on iterative elimination of candidates.} 
{\em Plurality with runoff} proceeds in two rounds:
we first select the two candidates $x$ and $y$ with highest plurality scores and the second round chooses between them according to majority. {\em Single transferable vote} (STV) proceeds in $m-1$ rounds: at each round, the candidate with the lowest plurality score among the remaining candidates (using tie-breaking if necessary) is eliminated. 


\subsection{Strategic candidacy}\label{subsec:model}


In addition to voters' preferences over candidates, expressed by the voter profile $P$, we assume that each candidate $i$ too has a linear preference ordering $\succ_i$ over candidates. We  assume furthermore that the candidates' preferences are \emph{self-supported}---that is, each candidates rank herself at the top of her ranking. Let $P^X=(P^X_c)_{c\in X}$ denote the candidates' preference profile. 

We assume that voters are sincere; therefore, when the set of candidates running for election is $Y \subseteq X$, each voter $i$ reports the restriction of $\succ_i$ to $Y$ and the obtained profile, denoted by $P^{\downarrow Y}$, is the restriction of $P$ to $Y$. 


Given a fixed voter profile $P$, a \emph{voting rule} $r$ can be seen as mapping each
$Y\subseteq X$ to a winner $r(P^{\downarrow Y})$ in $Y$. 
We use the  notation $Y \mapsto_{P,r} x$, or more simply, $Y \mapsto x$ when there is no ambiguity, to denote that the outcome of rule $r$ applied to profile $P$ restricted to the subset of candidates  $Y \subseteq X$ is $x$.

Each voting rule $r$ induces a natural \emph{game form}, where the set of players is given by the set of potential candidates $X$, and the strategy set available to each player is $\{0,1\}$ with $1$ corresponding to entering the election and $0$ standing for  withdrawal of candidacy. A \emph{state} $s$ of the game is a vector of strategies $(s_c)_{c\in X}$, where $s_c \in \{0,1\}$. For convenience, we use $s_{-z}$ to denote $(s_c)_{c\in X\setminus\{z\}}$---i.e., $s$ reduced by the single entry of player $z$. Similarly, for a state $s$ we use $s_Z$ to denote the strategy choices of a coalition $Z \subseteq X$ and $s_{-Z}$ for the complement, and we write $s = (s_Z,s_{-Z})$. 

The outcome of a state $s$ is $r\left(P^{\downarrow Y}\right)$ where $c\in Y$ if and only if $s_c=1$.\footnote{When clear from the context, we use vector $s$ to also denote the set of candidates $Y$ that corresponds to state $s$;   
e.g., if $X = \{x_1, x_2, x_3\}$, we note $\{x_1,x_3\}$ and (1,0,1) interchangeably. 
}  Coupled with a voter profile $P$ and a candidate profile $P^X$, this defines a normal form game $\Gamma=\langle X, P, r, P^X\rangle$ with $m$ players. Here, player $c$ prefers outcome $\Gamma(s)$ over outcome $\Gamma(s')$ if ordering $P^X_c$ ranks $\Gamma(s)$ above $\Gamma(s')$.


\subsection{Game-theoretic concepts}\label{subsec:equilibria}
Having defined a normal form game, we can now apply standard game-theoretic solution concepts. Let $\Gamma=\langle X, P, r, P^X\rangle$ be a candidacy game, and let $s$ be a state in $\Gamma$. We say that a coalition $Z \subseteq X$ has an \emph{improving move} in $s$ if there is $s'_Z$ such that $\Gamma(s_{-Z},s'_Z)$ is preferred to $\Gamma(s)$ by every 
$z \in Z$. In particular, the improving move is \emph{unilateral} if $|Z|=1$. A state is a \emph{(pure strategy) Nash equilibrium} (NE) if it has no unilateral improving moves, and 
a $k$-NE if no coalition with $|Z| \leq k$ has an improving move. A \emph{strong Nash equilibrium (SE)} \cite{Aumann59} is a state with no improving moves. 
\begin{example}
Consider the game $\langle \{a,b,c,d\},P,r,P^X \rangle$, where $r$ is the Borda rule, and $P$  and $P^X$ are as follows:\footnote{In our examples, when the tie-breaking ordering is not specified it is assumed to be lexicographic.  We generally 
omit curly brackets. 
The first row in $P$ indicates the number of voters casting the different ballots. We use the common convention of writing votes vertically, with the topmost candidate being preferred.}
\begin{small}
$$
\begin{array}{cc}
P & P^X\\
\begin{array}{|ccccccc|}
\hline
1 & 1 & 1 & 1 & 1 & 1 & 1 \\
\hline
b & c & c & a & d & b & a \\
d & d & d & c & a & c & b\\
a & a & b & b & c & d & c\\
c & b & a & d & b & a & d\\
\hline
\end{array} 
&
\begin{array}{|cccc|}
\hline
a & b & c & d \\
\hline
a & b & c & d \\
d & a & b & a \\
b & d & a & c \\
c & c &  d & b \\
\hline
\end{array}
\end{array}
$$
\end{small}
\noindent The state (1,1,1,1) is not an NE: $abcd \mapsto c$, but $abc \mapsto a$, and $d$ prefers $a$ to $c$, so for $d$, leaving is an improving move. Now, (1,1,1,0) is an NE, as noone has an improving move neither by joining ($d$ prefers $a$ over $c$), or by leaving (obviously not $a$; if $b$ or $c$ leaves then the winner is still $a$). It can be checked that this is also an SE.
\end{example}


\subsection{Preliminary results}\label{subsec:preliminaries}

Regardless of the number of voters and the voting rule used, a straightforward observation is that a candidacy game with {\em three} candidates is guaranteed to possess an NE. Note that this does not hold for SE.\footnote{Here is a counterexample (communicated to us by Markus Brill).
The selection rule is $abc \mapsto b$; $ab \mapsto a$; $ac \mapsto c$; $bc \mapsto c$; it can be easily implemented by the scoring rule with scoring vector $(5,4,0\rangle$ with 5 voters. Preferences of candidates are: $a: a \succ b \succ c; b: b \succ c \succ a; c: c \succ a \succ b$. The group deviations are: in $\{a,b,c\}$, $c$ leaves;  in $\{a,b\}$, $b$ leaves and $c$ joins; in $\{a,c\}$, $b$ joins; in $\{b,c\}$, $a$ joins; in $\{a\}$, $c$ joins; in $\{b\}$, $c$ joins; in $\{c\}$, $a$ and $b$ join.}

The first question which comes to mind is whether examples showing the absence of NE transfer to larger set of candidates. They indeed do, under an extremely mild assumption. 
We say that a voting rule is \emph{insensitive to bottom-ranked candidates} (IBC) if given any profile $P$ over $X = \{x_1, \ldots, x_m\}$, if $P'$ is the profile over $X \cup \{x_{m+1}\}$ obtained by adding $x_{m+1}$ at the bottom of every vote of $P$, then $r(P') = r(P)$. This property is extremely weak (much weaker than Pareto efficiency) and is satisfied by almost all `common' voting rules. 

\begin{lemma}\label{lemma-more-candidates}
For any voting rule $r$ satisfying IBC, if there exists $\Gamma = \langle X, P, r, P^X\rangle$ with no NE, then there exists $\Gamma' = \langle X', P', r, P^Y\rangle$ with no NE, where $|X'| = |X| + 1$.
\end{lemma}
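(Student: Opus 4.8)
The plan is to take the NE-free game $\Gamma = \langle X, P, r, P^X\rangle$ and build a larger game $\Gamma'$ on $X' = X \cup \{x_{m+1}\}$ by adding one new candidate who is engineered to be irrelevant: I will place $x_{m+1}$ at the bottom of every voter's ballot in $P$ to form $P'$, and I will extend the candidates' preference profile so that every original candidate ranks $x_{m+1}$ dead last, and $x_{m+1}$ ranks itself on top (as required by self-support) with an arbitrary completion below. The IBC property is exactly what makes $x_{m+1}$ a dummy: for any subset $Y \subseteq X$ of the original candidates, adding $x_{m+1}$ to the running set does not change the winner among the original candidates.

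\smallskip

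\noindent\textbf{Key steps.} First I would establish the core invariant: for every state $s'$ of $\Gamma'$ in which $x_{m+1}$ does not run, the winner equals the winner of the corresponding state $s$ in $\Gamma$ (this is immediate, as the restricted profiles coincide); and for every state $s'$ in which $x_{m+1}$ \emph{does} run, the winner equals the winner of the state obtained by dropping $x_{m+1}$, which is again a state of $\Gamma$. The second equality is where IBC does the work: the running set is $Y \cup \{x_{m+1}\}$ for some $Y \subseteq X$, and since $x_{m+1}$ sits at the bottom of every remaining vote, IBC gives $r(P'^{\downarrow Y \cup \{x_{m+1}\}}) = r(P'^{\downarrow Y}) = r(P^{\downarrow Y})$. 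Thus in $\Gamma'$ the new candidate $x_{m+1}$ never influences the outcome. Second, I would argue that $\Gamma'$ has no NE by contradiction: suppose $s'$ is an NE of $\Gamma'$.

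\smallskip

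\noindent Consider the original candidates' strategies $s = s'_{-x_{m+1}}$ restricted to $X$; by the invariant, $\Gamma'(s')$ equals $\Gamma(s)$. Since each original candidate $c$ ranks $x_{m+1}$ last and the winner of any reachable state always lies in $X$, candidate $c$'s preferences over the possible outcomes of $\Gamma'$ agree exactly with its preferences in $\Gamma$; moreover, by the invariant, $c$'s unilateral deviations in $\Gamma'$ produce the same winners as the corresponding deviations in $\Gamma$. Hence if some $c \in X$ had an improving move in $s$ (which it must, since $s$ is not an NE of $\Gamma$), that same move is improving for $c$ in $s'$, contradicting that $s'$ is an NE. Finally I would check the deviations of $x_{m+1}$ itself: since $x_{m+1}$ never affects the winner, toggling its own strategy leaves the outcome unchanged, so $x_{m+1}$ never has a strict improving move and this imposes no additional constraint. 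Therefore $\Gamma'$ has no NE.

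\smallskip

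\noindent\textbf{Main obstacle.} The delicate point is not the construction but verifying that the original candidates' \emph{incentives} transfer unchanged. One must confirm that every outcome reachable in $\Gamma'$ is a member of $X$ (never $x_{m+1}$), so that placing $x_{m+1}$ at the bottom of each $P^X_c$ guarantees the candidates' relative rankings of all \emph{attainable} winners are identical in the two games; only then does a lack of improving moves in $\Gamma'$ force a lack of improving moves in $\Gamma$. I expect the cleanest way to handle this is to state and use the invariant as a lemma-within-the-proof, and then the equivalence of improving moves follows mechanically.
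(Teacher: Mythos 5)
Your proposal is correct and follows essentially the same route as the paper: the identical construction (append $x_{m+1}$ to the bottom of every vote and of every original candidate's ranking), and the same use of IBC to show that for each state of $\Gamma'$ the improving move of the corresponding non-equilibrium state of $\Gamma$ transfers unchanged. Your explicit "dummy candidate" invariant is just a slightly more spelled-out version of the paper's case split between $Y$ and $Y\cup\{x_{m+1}\}$.
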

\begin{proof}
Take  $\Gamma$ with no NE, with $X = \{x_1, \ldots, x_m\}$. Let $X' = X \cup \{x_{m+1}\}$, $P'$ the profile obtained from $P$ by adding $x_{m+1}$ at the bottom of every vote, and $P^{X'}$ be the candidate profile obtained by adding $x_{m+1}$ at the bottom of every ranking of a candidate $x_i$, $i < m$, and whatever ranking for $x_{m+1}$. Let $Y \subseteq X$. Because $Y$ is not an NE for $\Gamma$, some candidate $x_i \in X$ has an interest to leave or to join, therefore $Y$ is not an NE either for $\Gamma'$. Now, consider $Y' = Y \cup \{x_{m+1}\}$. If $x_i \in X$ has an interest to leave (resp., join) $Y$, then because $r$ satisfies IBC, the winner in $Y' \setminus \{x_i\}$ (resp., $Y' \cup \{x_i\}$) is the same as in $Y \setminus \{x_i\}$ (resp., $Y \cup \{x_i\}$), therefore $x_i \in X$ has an interest to leave (resp., join) $Y'$,  therefore $Y'$ is not an NE.
\qed
\end{proof}

We will use this induction lemma to extend some of our negative results to an arbitrary number of candidates. A noticeable exception is the veto rule, which does not satisfy IBC. In Appendix A we provide a specific lemma to handle this rule. \medskip

The following result applies to any number of candidates and Condorcet-consistent rules. 
\begin{proposition}\label{prop:CondorcetWinner}
Let $\Gamma=\langle X, P, r, P^X\rangle$ be a candidacy game where $r$ is Condorcet-consistent. If $P$ has a Condorcet winner $c$ then for any $Y \subseteq X$, 
\begin{center}$Y$ is an SE $\Leftrightarrow$ $Y$ is an NE $\Leftrightarrow$ $c \in Y$.\end{center}
\end{proposition}
The very easy proof can be found in Appendix A. If $P$ has no Condorcet winner, the analysis becomes more complicated. We provide results for this more general case in the following sections.



\section{The first frontier: four candidates}\label{sec:4candidates}
With only four potential candidates, we exhibit a sharp contrast between the Condorcet-consistent rules, for which a Nash equilibrium is guaranteed to exist (for odd $n$), and many other voting rules.

\subsection{Scoring rules}
We make use of a powerful result by Saari~\cite{SaariJET1989} which states that for almost all scoring rules, any choice function can result from a voting profile. For four candidates~\cite{Saari1996}, we define a {\em Saari rule} as a rule for which, when the scoring vector for three candidates is of the form $\langle w_1, w_2, 0 \rangle$, then the vector for four candidates is {\em not} $\langle 3w_1, w_1 + 2w_2, 2w_2, 0 \rangle$. For instance, plurality and veto are Saari rules, but {\em the Borda rule is not a Saari rule}.
For any Saari rule, any choice function can result from a voting profile~\cite{SaariJET1989,Saari1996}. This means that our question boils down to check whether a \emph{choice function}, together with some coherent candidates' preferences, can be found such that no NE exists with four candidates. We solved this question by encoding the problem as an Integer Linear Program (ILP), the details of which can be found in Appendix B. 

It turns out that such choice functions do exist. We depict one of them in Figure~\ref{fig:ChoiceFunctionNoNe} (where arrows denote deviations and the right part of each cell denotes the winner), which rules out the existence of an NE when taken with the candidates preferences: 
$$
\begin{array}{ll}
a: & a \succ b \succ c \succ d\\
b: & b \succ a \succ c \succ d \\
c: & c \succ d \succ a \succ b \\
d: & d \succ a \succ b \succ c \\
\end{array}
$$

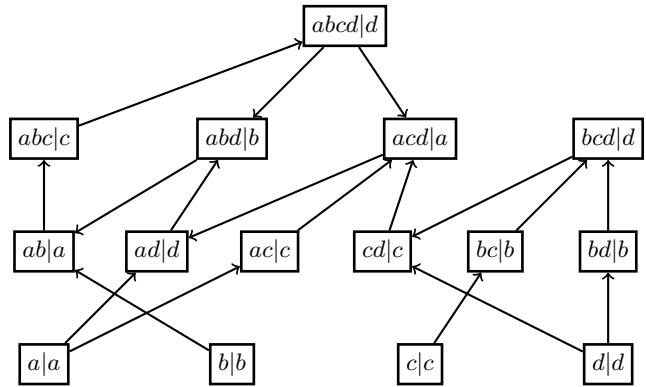
\begin{figure}
\begin{center}
\begin{tikzpicture}[scale=0.5]
\tikzstyle{is-in}=[rectangle,draw, line width=1pt]

\node (abcd) at (8,9) [is-in,label=above:\footnotesize{$$}, label=below:\footnotesize{}] {$abcd|d$};
\node (abc) at (0,6) [is-in,label=above:\footnotesize{$$}, label=below:\footnotesize{}] {$abc|c$};
\node (abd) at (5,6) [is-in,label=above:\footnotesize{$$}, label=below:\footnotesize{}] {$abd|b$};
\node (acd) at (10,6) [is-in,label=above:\footnotesize{$$}, label=above:\footnotesize{}] {$acd|a$};
\node (bcd) at (15,6) [is-in,label=above:\footnotesize{$$}, label=above:\footnotesize{}] {$bcd|d$};
\node (ab) at (0,3) [is-in,label=below:\footnotesize{$$}, label=below:\footnotesize{}] {$ab|a$};
\node (ac) at (6,3) [is-in,label=below:\footnotesize{$$}, label=below:\footnotesize{}] {$ac|c$};
\node (ad) at (3,3) [is-in,label=below:\footnotesize{$$}, label=below:\footnotesize{}] {$ad|d$};
\node (bc) at (12,3) [is-in,label=below:\footnotesize{$$}, label=below:\footnotesize{}] {$bc|b$};
\node (bd) at (15,3) [is-in,label=below:\footnotesize{$$}, label=below:\footnotesize{}] {$bd|b$};
\node (cd) at (9,3) [is-in,label=below:\footnotesize{$$}, label=below:\footnotesize{}] {$cd|c$};
\node (a) at (0,0) [is-in,label=below:\footnotesize{$$}, label=below:\footnotesize{}] {$a|a$};
\node (b) at (5,0) [is-in,label=below:\footnotesize{$$}, label=below:\footnotesize{}] {$b|b$};
\node (c) at (10,0) [is-in,label=below:\footnotesize{$$}, label=below:\footnotesize{}] {$c|c$};
\node (d) at (15,0) [is-in,label=below:\footnotesize{$$}, label=below:\footnotesize{}] {$d|d$};


\draw [thick, ->] (ab) -- (abc);
\draw [thick,->] (ad) -- (abd);

\draw [thick,->] (bc) -- (bcd);

\draw [thick,->] (ac) -- (acd); 

\draw [thick,->] (bd) -- (bcd); 

\draw [thick,->] (cd) -- (acd);

\draw [thick,->] (abc) -- (abcd);

\draw [thick,->] (acd) -- (ad);

\draw [thick,->] (abd) -- (ab);

\draw [thick,->] (bcd) -- (cd);

\draw [thick,->] (abcd) -- (acd);
\draw [thick,->] (abcd) -- (abd);

\draw [thick,->] (b) -- (ab);
\draw [thick,->] (a) -- (ad);
\draw [thick,->] (a) -- (ac);
\draw [thick,->] (c) -- (bc);
\draw [thick,->] (d) -- (cd);
\draw [thick,->] (d) -- (bd);

\end{tikzpicture}
\caption{A choice function without NE}
\label{fig:ChoiceFunctionNoNe}
\end{center}
\end{figure}

The following result then follows directly.

\begin{proposition}\label{saari4}
For four candidates, if $r$ is a Saari rule, there are candidacy games without Nash equilibria.  
\end{proposition}

As a corollary, we get that: 
\begin{corollary}\label{saari4}
For plurality, veto (and more generally, for $k$-approval with any $k$), there are candidacy games without NE.
\end{corollary}

Note that Saari's result shows that counter-examples can be obtained for all these scoring rules, but it does not directly provide the profile satisfying this choice function. 
These profiles may involve a large number of voters.
For plurality, we exhibit a profile with 13 voters corresponding to the choice function given in Fig.  \ref{fig:ChoiceFunctionNoNe}, whose preferences are shown on the left part of the table below.
The right part of the table represents $P^X$.

\begin{footnotesize}
\[
\begin{array}{|ccccccccc|}
\hline
3 & 1 & 1 & 1 & 1 & 1 & 1 & 2 & 2 \\
\hline
d & d & d & a & a & a & b & b & c \\
c & b & a & b & c & d & c & a & b \\
a & c & b & c & b & b & d & c & d \\
b & a & c & d & d & c & a & d & a \\
\hline
\end{array} 
\begin{array}{|cccc|}
\hline
a & b & c & d \\
\hline
a & b & c & d \\
b & a & d & a \\
c & c & a & b \\
d & d & b & c \\
\hline
\end{array}
\]
\end{footnotesize}
\noindent
Similar profiles can be obtained for other Saari rules.  As for the Borda rule, which is not a Saari rule, it stands as an exception:
\begin{proposition}
For Borda and $m=4$, every candidacy game has an NE. 
\end{proposition}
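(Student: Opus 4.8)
The proof should exploit the single feature that makes Borda \emph{peculiar} among scoring rules: in any set $Y$, the Borda score of a candidate $c$ equals the sum of its pairwise supports inside $Y$, that is $B_Y(c)=\sum_{x\in Y\setminus\{c\}} N(c,x)$; equivalently, when a candidate $z$ withdraws from $Y$, every remaining candidate $c$ loses exactly $N(c,z)$ points. I would first record this identity as a lemma (it is immediate from the definition, since for each voter $c$ loses precisely the one point it earned for being ranked above $z$). Its consequence is that the whole candidacy game collapses to a purely combinatorial object over the margin matrix: the Borda winner of any $Y$ is the candidate of $Y$ maximising $\sum_{x\in Y} N(c,x)$ (broken by the fixed priority), so every improving move can be read off from $(N(\cdot,\cdot))$ together with the self-supported preferences $P^X$. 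This is exactly the structure that \emph{fails} for generic scoring rules, where removing $z$ shifts scores non-uniformly --- which is what lets Saari realise arbitrary choice functions and produce the counterexample of the previous proposition --- and it is the reason Borda escapes.

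Next I would dispose of the Condorcet case. If $P$ has a Condorcet winner $c^\ast$, I claim the singleton state $\{c^\ast\}$ is already an NE, so nothing further is needed (and Condorcet-consistency of Borda is not required). Indeed $c^\ast$ is the unique candidate and wins, hence has no improving move; and for any other candidate $y$, entering yields the two-candidate election $\{c^\ast,y\}$, whose Borda winner is the pairwise majority winner, namely $c^\ast$. So entering leaves the outcome unchanged and is not improving. Thus only the no-Condorcet-winner case is genuinely at stake, and there the majority relation on the four candidates has no candidate beating all others; up to renaming (and treating majority ties as extra sub-cases resolved by the priority order) this leaves only the tournament types with score sequence $(0,2,2,2)$ --- a Condorcet loser below a three-cycle --- and $(1,1,2,2)$.

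For these cases my plan is a case analysis driven by a duality for the grand coalition. Writing $w$ for the Borda winner of $X=\{a,b,c,d\}$ and $w_z$ for the winner of $X\setminus\{z\}$, the full state $X$ is an NE if and only if every $z\neq w$ weakly prefers $w$ to $w_z$ according to $P^X_z$ (the winner $w$ never leaves, and there is nobody outside to enter). If this holds we are done. Otherwise some $z_0$ strictly prefers $w_{z_0}$ to $w$, and I pass to the three-candidate state $Y_0=X\setminus\{z_0\}$: being the only outsider, $z_0$ provably does not re-enter, since re-entering restores $w$, which $z_0$ likes strictly less than the current winner $w_{z_0}$. Hence $Y_0$ is an NE unless some member wants to leave, in which case I descend to a two-candidate state, each time reusing the observation that the just-departed candidate will not come back.

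The crux --- and the only delicate point --- is to show that this descent cannot cycle, i.e.\ that one of the visited states is genuinely a sink of the single-candidate improving-move graph. Using the pairwise-support representation, each $w_z$ and each two-candidate winner is an explicit function of the margins, so every deviation around a putative cycle becomes an inequality among the orderings $P^X_c$; the task is to verify, for each of the two tournament types, that the preference conditions simultaneously enabling all the deviations along a cycle are jointly inconsistent with transitivity and self-support of the candidates' preferences. I expect the transitivity and self-support constraints on the three cyclically-involved candidates to be exactly what forbids a sink-free configuration. Because everything is now a finite pattern on margins and self-supported preferences, this analysis is finite and can be discharged exhaustively (in the same spirit as the ILP search used for the negative results); in the write-up I would present it as locating, in each tournament type, an explicit NE among the three- and two-element states, but the content is this incompatibility.
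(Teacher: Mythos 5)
Your plan rests on exactly the same pivot as the paper's argument: Borda restricted to any subset is determined by the weighted majority graph (your pairwise-support lemma is precisely why the paper can add "Borda constraints" over the margins $N(i,j)$ to its ILP), and the decisive step --- showing that no sink-free configuration of deviations is consistent with Borda-realizability plus transitive, self-supported candidate preferences --- is in both cases discharged by an exhaustive finite check (your "discharged exhaustively" is the paper's ILP infeasibility). Your added structure (disposing of the Condorcet-winner case via the singleton state, and the descent argument with its correctly flagged subtlety that an earlier-departed candidate may wish to re-enter a smaller state) is sound and somewhat more explicit than the paper's write-up, but it does not replace the exhaustive verification, so the two proofs are essentially the same.
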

This result was obtained by a translation into an integer linear program, then run on a computer. It relies on the fact that Borda rule can be computed from the weighted majority graph, and by adding the corresponding constraints into the ILP (for the details of this ILP, see Appendix B). 
The infeasibility of the resulting set of constraints shows that no instances without NE can be constructed.  \medskip

However, it takes only coalitions of pairs of agents to ruin this stability.  Indeed, 
for Borda and $m=4$, there are candidacy games without 2-NE. 
This can be seen on the following candidacy game:   
\[
\begin{array}{|ccccc|}
\hline
1 & 1 & 1 & 1 & 1 \\
\hline
b & c & d & a & b  \\
d & d & a & b & c  \\
c & a & c & c & d  \\
a & b & b & d & a  \\
\hline
\end{array} 
\begin{array}{|cccc|}
\hline
a & b & c & d \\
\hline
a & b & c & d \\
c & a & a & b \\
d & c & d & a \\
b & d & b & c \\
\hline
\end{array}
\]
Only $s_1=(0,1,1,1)$ and $s_2=(1,1,0,1)$ are NE, with $bcd \mapsto b$, and $abd \mapsto d$. From $s_1$ the coalition $\{a,c\}$ has an improving move to $s_2$ as they both prefer $d$ to $b$. From $s_2$, if $b$ leaves and $c$ joins, they reach $(1,0,1,1)$, with $acd \mapsto c$ and both prefer $c$ to $d$. 

\subsection{Rules based on successive elimination}
Let us now focus on \emph{plurality with runoff} and \emph{single transferable vote}. 
For these  rules, it is no longer the case that any choice function can be implemented by such rules. For instance, for plurality with runoff, a necessary condition for the choice function to be implementable is that, for any subset of candidates $Y$, $|Y|\geq 3$, if $r(Y)=x$, then $x$ must win in pairwise comparison against \emph{some} candidate $y \in Y \setminus \{x\}$. For STV, a stronger condition is even required: for any subset of candidates $Y$, if $r(Y)=x$, it must be the case that $r(Z)=x$ for some set $Z\subset Y$ such that $|Z|=|Y-1|$. 

We make no claim that these conditions are sufficient to ensure a possible implementation.  However, by adding these constraints into our ILP, we generated a choice function that we could in turn implement with a specific profile, thus providing us the following result. 

\begin{proposition}\label{pro-stv4}
For plurality with runoff and single transferable vote and $m=4$, there are candidacy games without NE. 
\end{proposition}
\begin{proof}
We exhibit a counter-example with 9 voters.
The tie-breaking is $d\tb a \tb c  \tb b$. 

\begin{footnotesize}
\[
\begin{array}{|ccccccccc|}
\hline
1 & 1 & 1 & 1 & 1 & 1 & 1 & 1 & 1 \\
\hline
a & a & b & b & b & c & c & d & d \\
c & d & a & c & d & b & d & a & c \\
b & b & c & a & c & d & a & b & a \\
d & c & d & d & a & a & b & c & b \\
\hline
\end{array} 
\begin{array}{|cccc|}
\hline
a & b & c & d \\
\hline
a & b & c & d \\
b & a & a & a \\
d & c & b & c \\
c & d & d & b \\
\hline
\end{array}
\]
\end{footnotesize} \qed
\end{proof}

\subsection{Condorcet-consistent rules}
We now turn our attention to Condorcet-consistent rules. We recall that we assume the number of voters $n$ to be odd. 

\begin{proposition}\label{Condorcet-four} 
For $m=4$ (and $n$ odd), if $r$ is Condorcet-consistent then every candidacy game has an NE.
\end{proposition}

\begin{proof}

For any profile $P$, let $G_P$ the complete tournament obtained from the majority graph associated with $P$. Although we do not assume that $r$ is based on the majority graph, we nevertheless prove our result by considering all possible tournaments on four candidates (we shall get back to this point at the end of the proof). 
In the proof, when we speak of an ``NE in $G$'' we mean an NE in any candidacy game for which the profile $P$ is associated with the majority graph $G$.
There are four tournaments to consider (all others are obtained from these ones by symmetry). 
\begin{center}
\begin{tikzpicture}[scale=0.8]


\node (A1) at (0,2.5) [] {$a$};
\node (B1) at (1.5,2.5) [] {$b$};
\node (C1) at (0,1) [] {$c$};
\node (D1) at (1.5,1) [] {$d$};

\node (G1)     at (0.8,0) [text width=0.5cm,font=\footnotesize] {$G_1$};

\draw [->] (A1) -- (B1);
\draw [->] (A1) -- (C1);
\draw [->] (A1) -- (D1);
\draw [->] (B1) -- (D1);
\draw [->] (C1) -- (D1);
\draw [->] (B1) -- (C1);

\node (A2) at (2.5,2.5) [] {$a$};
\node (B2) at (4,2.5) [] {$b$};
\node (C2) at (2.5,1) [] {$c$};
\node (D2) at (4,1) [] {$d$};

\node (G2)     at (3.3,0) [text width=0.5cm,font=\footnotesize] {$G_2$};

\draw [->] (A2) -- (B2);
\draw [->] (B2) -- (C2);
\draw [->] (C2) -- (D2);
\draw [->] (D2) -- (B2);
\draw [->] (A2) -- (D2);
\draw [->] (A2) -- (C2);

\node (A3) at (5,2.5) [] {$a$};
\node (B3) at (6.5,2.5) [] {$b$};
\node (C3) at (5,1) [] {$c$};
\node (D3) at (6.5,1) [] {$d$};

\node (G3)     at (5.8,0) [text width=0.5cm,font=\footnotesize] {$G_3$};

\draw [->] (B3) -- (A3);
\draw [->] (C3) -- (A3);
\draw [->] (D3) -- (A3);
\draw [->] (B3) -- (C3);
\draw [->] (C3) -- (D3);
\draw [->] (D3) -- (B3);

\node (A4) at (8,2.5) [] {$a$};
\node (B4) at (9.5,2.5) [] {$b$};
\node (C4) at (8,1) [] {$c$};
\node (D4) at (9.5,1) [] {$d$};

\node (G4)     at (8.3,0) [text width=0.5cm,font=\footnotesize] {$G_4$};

\draw [->] (A4) -- (C4);
\draw [->] (C4) -- (D4);
\draw [->] (D4) -- (B4);
\draw [->] (B4) -- (A4);
\draw [->] (A4) -- (D4);
\draw [->] (C4) -- (B4);
\end{tikzpicture}
\end{center}

%
For $G_1$ and $G_2$, any subset of $X$ containing the Condorcet winner is an NE (see Proposition~\ref{prop:CondorcetWinner}). 
For $G_3$, we note that $a$ is a Condorcet loser. That is, $N(a,x)<N(x,a)$ for all $x\in\{b,c,d\}$. Note that in this case, there is no Condorcet winner in the reduced profile $P^{\downarrow \{b,c,d\}}$ as this would imply the existence of a Condorcet winner in $P$ (case $G_1$ or $G_2$). W.l.o.g., assume that $b$ beats $c$, $c$ beats $d$, and $d$ beats $b$. W.l.o.g. again, assume that $bcd \mapsto b$. Then, $\{b,c\}$ is an NE. Indeed, in any set of just two candidates, none has an incentive to leave. Now, $a$ or $d$ have no incentive to join as this would not change the winner: in the former case, observe that $b$ is the (unique) Condorcet winner in $P^{\downarrow \{a,b,c\}}$, and the latter follows by our assumption. There is always an NE for $G_3$. 

The proof for $G_4$ is more complex and proceeds case by case. Since $r$ is Condorcet-consistent, we have $acd \mapsto a$, $bcd \mapsto c$, $ab \mapsto b$, $ac \mapsto a$, $ad \mapsto a$, $bc \mapsto c$, $bd \mapsto d$ and $cd \mapsto c$. The sets of candidates for which $r$ is undetermined are $abcd$, $abc$ and $abd$.  

We have the following easy facts: {\em (i)} if $abcd \mapsto a$ then $acd$ is an NE, {\em (ii)} if $abcd \mapsto c$ then $bcd$ is an NE, 
{\em (iii)} if $abc \mapsto a$ then $ac$ is an NE, 
{\em (iv)} if $abd \mapsto a$ then $ad$ is an NE, 
{\em (v)} if $abc \mapsto c$ then $bc$ is an NE. 
The only remaining cases are: 

\begin{enumerate}
\item $abcd \mapsto b$,  $abc \mapsto b$,  $abd \mapsto b$.  
\item $abcd \mapsto b$,  $abc \mapsto b$,  $abd \mapsto d$.  
\item $abcd \mapsto d$,  $abc \mapsto b$,  $abd \mapsto b$.  
\item $abcd \mapsto d$,  $abc \mapsto b$,  $abd \mapsto d$.    
\end{enumerate}

In cases 1 and 3, $ab$ is an NE. 
In case 2, if $a$ prefers $b$ to $c$ then $abc$ is an NE, and if $a$ prefers $c$ to $b$, then $bcd$ is an NE.
In case 4, if $a$ prefers $c$ to $d$, then $bcd$ is an NE; if $b$ prefers $a$ to $d$, then $ad$ is an NE; finally, if $a$ prefers $d$ to $c$ and $b$ prefers $d$ to $a$, then $abcd$ is an NE.  
To conclude, observe that the proof never uses the fact that two profiles having the same majority graph have the same winner.\footnote{For instance, we may have two profiles $P$, $P'$ both corresponding to $G_4$, such that $r(P)=a$ and $r(P') = b$; the proof perfectly works in such a case.} \qed 
\end{proof}

Thus, the picture for four candidates shows a sharp contrast. On one hand, we show that ``almost all scoring rules''  \cite{SaariJET1989}, single transferable vote, and plurality with run-off, may fail to have an  NE. On the other hand, Condorcet-consistency alone suffices to guarantee the existence of an NE. 



\section{The second frontier: five candidates}\label{sec:5candidates}

We start with scoring rules. Recall that for four candidates we had the non-existence results for most rules, with Borda being a noticeable exception. We now show that five candidates is enough for Borda to lose this guarantee of the existence of NE. 

\begin{proposition}\label{prop:borda-5cand}
For the Borda rule, with five candidates, there are candidacy games without Nash equilibria.
\end{proposition}

\begin{proof}
The following counterexample has been obtained by applying the same ILP technique as described in the previous section. 
We do not give the profile but only its majority margin matrix, where the number corresponding to row $x$ and column $y$ is $N_P(x,y) - N_P(y,x)$; by Debord's theorem \cite{Debo87a}, the existence of a profile $P$ realizing this matrix is guaranteed because all elements of the matrix have the same parity. 

 \[
\begin{array}{|l|lllll}
\hline
& a & b & c & d & e \\
\hline
a & - & -3 & -1 & +1 & +3 \\
b & +3 & - & -5 & +1 & -1 \\
c & +1 & +5 & - & -5 & -1 \\
d & -1 & -1 & +5 & - & -3 \\
e & -3 & +1 & +1 & +3 & - \\
\hline
\end{array}
\begin{array}{|ccccc|}
\hline
a & b & c & d & e \\
\hline
a & b & c & d & e\\
b & a & a & c & c\\
e & e & d & e & d \\
c & c & e & a & a\\
d & d & b & b & b \\
\hline
\end{array}
\]

Below we give the explicit listing of all 31 states, introducing a notation that we shall use throughout the paper: the outcome of the choice function (the winner in each state) is given in boldface, and a deviation from this state is given next to each state, where $x+$ (respectively $x-$) means that $x$ has a profitable deviation by joining (respectively, by leaving) this state. It can be seen that none of the 31 states is an NE. 

$$\begin{array}{ccccc}
\begin{array}{cc}
{\mathbf a} & b+ \\
{\mathbf b} & c+ \\
{\mathbf c} & d+ \\
{\mathbf d} & e+ \\
{\mathbf e} & a+ 
\end{array}
&
\begin{array}{cc}
a {\mathbf b} & c+ \\
a {\mathbf c} & d+ \\
{\mathbf a} d & b+ \\
{\mathbf a}e & b+ \\
b{\mathbf c} & d+ \\
{\mathbf b}d & e+\\
b {\mathbf e} & c+\\
c{\mathbf d} & e+\\
c{\mathbf e} & a+\\
d{\mathbf e} & a+\\
\end{array}
&
\begin{array}{cc}
a b{\mathbf c} & d+ \\
a{\mathbf b}d & c+ \\
a{\mathbf b}e& c+ \\
ac{\mathbf d} & c- \\
{\mathbf a}ce & e- \\
{\mathbf a}de & b+ \\
bc{\mathbf d} & e+\\
b {\mathbf c}e & b-\\
bd{\mathbf e} & a+\\
cd{\mathbf e} & a+\\
\end{array}
&
\begin{array}{cc}
abc {\mathbf d} & e+ \\
ab{\mathbf c}e & b- \\
a{\mathbf b}de& c+ \\
{\mathbf a}cde & e- \\
bcd{\mathbf e} & d- \\
\end{array}
&
\begin{array}{cc}
abcd{\mathbf e} & b- 
\end{array}
\end{array} 
$$ \qed
\end{proof}

Recall that for Condorcet-consistent rules, the existence of NE is guaranteed for four candidates. For the maximin rule and the uncovered set rule, this existence result stops at four. The proof, consisting of two counterexamples, is in Appendix A.

\begin{proposition}\label{prop:maximin-5cand}
For the maximin rule and the uncovered set rule, with five candidates, there are candidacy games without  NE.
\end{proposition}


However, this negative result does not extend to all Condorcet-consistent rules, as shown in Proposition~\ref{prop:topcycle-5cand} below (and also in Proposition~\ref{prop:Copeland-morecand} in the following section).

\begin{proposition}\label{prop:topcycle-5cand}
For the Top-Cycle rule, with five candidates, every candidacy game has a Nash equilibrium. 
\end{proposition}

\begin{proof}
Let $P$ be a profile over $X = \{a,b,c,d,e\}$ and without loss of generality, assume that the tie-breaking priority ranks $a$ above all other candidates. 
If $|TC(P)| \leq 4$ then consider the restriction $P^{\downarrow TC(P)}$ of $P$ to $TC(P)$. It is a $q$-candidate profile for $q \leq 4$, therefore by Proposition \ref{Condorcet-four} the corresponding candidacy game has an NE $Z \subseteq TC(P)$. Because it is an NE in $P^{\downarrow TC(P)}$, no candidate in $TC(P)$ has an incentive to deviate. Now, if a candidate in $X \setminus TC(P)$ joins, the outcome does not change, therefore no candidate outside $TC(P)$ has an incentive to join. Therefore, $Z$ is an NE for $P$.

Assume now that $TC(P) = \{a,b,c,d,e\}$; this implies $TC_t(P) = a$. Without loss of generality, assume 
the majority graph contains $a \to b \to c \to d \to e \to a$. 
For $abcde$ not to be an NE, a withdrawing agent $x$ has to induce a new top-cycle \emph{not} containing $a$. If this top-cycle is a singleton, then 
$X \setminus \{x\}$ is an NE. Therefore, the top-cycle after the withdrawal of $x$ must be of size 3: it can only be $\{c,d,e\}$, with $b$ withdrawing because it prefers 
the most prioritary candidate (let us call it $y$)  among $\{c,d,e\}$  
to $a$. At this stage, we know that $d \to a$, $c \to a$, $e \to a$, $c \to d \to e \to c$, and that the winner in $acde$ is $y$. 
Observe that, irrespective of the tie-breaking winner, $a$ cannot leave because the winner would remain the same. There are thus three cases to consider: 
\begin{itemize}
\item {\em Case 1: $y = c$}. Consider $acd \mapsto c$. Since $ac \mapsto c$, $cd \mapsto c$, and $acde \mapsto c$, $acd$ is not an NE only if $b$ wants to join; but $abcd \mapsto a$, and $b$ prefers $c$ to $a$:  $bcd$ is an NE. 
\item {\em Case 2: $y = e$}. Consider $ace \mapsto e$. Since $ae \mapsto e$, $ce \mapsto e$, and $acde \mapsto e$, $ace$ is not an NE only if $b$ wants to join. For this to be possible, we must have $b \to e$, and then $abce \mapsto a$. But in this case, since $abc \mapsto a$, $abe \mapsto a$, and $abcde \mapsto a$,  $abce$ is an NE. Therefore, either $ace$ or $abce$ is an NE.  
\item {\em Case 3: $y = d$}. Consider $ade \mapsto d$. Since $ad \mapsto d$, $de \mapsto d$ and $acde \mapsto d$, $ade$ is not an NE only if $b$ wants to join. For this to be possible, it must be that $b \to d$ (and $b$ prefers $a$ over $d$). Thus $abde \mapsto a$. In this case, since $abd \mapsto a$ and $abcde \mapsto a$, $abde$ is not an NE only if  $d$ wants to leave. This is possible only if $e \to b$ (and $d$ prefers $e$ over $a$). But then $abe \mapsto e$, $ae \mapsto e$, $be \mapsto e$, and $abce \mapsto e$: $abde$ is an NE.  Therefore, either $ade$ or $abde$ is an NE. \qed
\end{itemize}
\end{proof}



\section{More candidates}\label{sec:more}
In this section, we present our results for a general number of candidates.
\subsection{A positive result: Copeland}

We show the existence of NE for Copeland, under deterministic tie-breaking, for any number of candidates (provided $n$ is odd).
%
\begin{proposition}\label{prop:Copeland-morecand}
For Copeland, for any number of candidates and an odd number of voters, every candidacy game has an NE.
\end{proposition}

\begin{proof}
Let $P$ be a profile and $\rightarrow_P$ its associated majority graph. Let $C(x,P)$ be the number of candidates $y \neq x$ such that $x \rightarrow_P y$. The Copeland cowinners for $P$ are the candidates maximizing $C(\cdot,P)$. 

Let $Cop(P)$ be the set of Copeland cowinners for $P$ and let $c$ be the Copeland winner---i.e., the most prioritary candidate in $Cop(P)$. Consider $Dom(c) = \{c\} \cup \{y | c \rightarrow_P y\}$. Note that $C\left(c,P^{\downarrow Dom(c)}\right)=|Dom(c)| -1 = q \geq C(c,P)$. Also, since any $y \in Dom(c)$ is beaten by $c$, we have  $C(y,P^{\downarrow Dom(c)} )\leq q-1$.

We claim that $Dom(c)$ is an NE. Note that  $c$ is a Condorcet winner in the restriction of $P$ to $Dom(c)$, and {\em a fortiori}, in the restriction of $P$ to any subset of $Dom(c)$. Hence, $c$ is the Copeland winner in $Dom(c)$ and any of its subsets, and no candidate in $Dom(c)$ has an incentive to leave. 

Now, assume there is a candidate $z \in X \setminus Dom(c)$ such that  $r\left(P^{\downarrow Dom(c) \cup \{z\}}\right)\neq c$. Note that $z \rightarrow_P c$ as $z$ does not belong to $Dom(c)$; so, $C(c,P^{\downarrow Dom(c) \cup \{z\}} ) = q$. 

For any $y \in Dom(c)$ we have $C(y,P^{\downarrow Dom(c) \cup \{z\}} )\leq (q-1)+1=q= C(c,P^{\downarrow Dom(c) \cup \{z\}} )$. If $C\left(y,P^{\downarrow Dom(c) \cup \{z\}} \right)< C(c,P^{\downarrow Dom(c) \cup \{z\}} )$, then $y$ is not the Copeland winner in $P^{\downarrow Dom(c) \cup \{z\}}$. If $C\left(y,P^{\downarrow Dom(c) \cup \{z\}} \right) = C\left(c,P^{\downarrow Dom(c) \cup \{z\}}\right)$, then $C(y,P)\geq C(c,P)$. That is, either $c\notin Cop(P)$, a contradiction, or both $y,c$ are in $Cop(P)$. The latter implies $c \vartriangleright y$; hence, $y$ is not the Copeland winner in $P^{\downarrow Dom(c) \cup \{z\}}$. 

Hence, $r\left(P^{\downarrow Dom(c) \cup \{z\}}\right) = z$. That is, either (1) $C\left(z,P^{\downarrow Dom(c) \cup \{z\}}\right)> q$, or (2) $C\left(z,P^{\downarrow Dom(c) \cup \{z\}}\right) = q$ and $z \vartriangleright c$. If (1) holds then $C(z,P) > C(c,P)$, which contradicts the fact that $c$ is the Copeland winner in $P$. If (2) holds then $C(z,P) = C(c,P)$---i.e.,  both $c$ and $z$ are in $Cop(P)$, which implies that $c \vartriangleright z$, and $z$ cannot win in $P^{\downarrow Dom(c) \cup \{z\}}$. Therefore, the Copeland winner in $P^{\downarrow Dom(c) \cup \{z\}}$ is $c$, which implies that $z$ has no incentive to join $Dom(c)$. \qed
\end{proof}

Note that not only the existence of an NE is guaranteed, but also the existence of an NE {\em where the winner is the same winner as on the original profile} (that is, the Copeland winner of the profile with all candidates running).\footnote{Note however that this does {\em not} imply that the set of all candidates is an NE. For instance, let $X = \{a,b,c,d\}$, and consider the majority graph $a \to b,a \to c; b \to c, b \to d;  d \to a, d\to c$, with the tie-breaking priority relation $a \tb b \tb c \tb d$. The Copeland winner is $a$ (by tie-breaking). We only need to specify that $b: d \succ a$ on top of self-supported preferences. $X$ is not an NE, because it is a profitable deviation for $b$ to leave.}

When $n$ is even, the result carries on if no pairwise majority ties occur. In the general case, however, the result depends on the way ties are taken into account for computing the Copeland score of a candidate. For the variant Copeland$^0$ where the Copeland score remains the number of outgoing edges (ties not giving any point), the result still holds. Whether it holds for other variants is an open question.


\subsection{Top Cycle} 


\begin{proposition}\label{prop:topcycle-7cand}
For the Top-Cycle rule, with six candidates, every candidacy game has an NE, and with seven candidates, there are candidacy games without an NE. 
\end{proposition}

Both results have been obtained by computer search. 
Technically, we first pruned the domain to reduce the number of majority graphs to consider. Then, for each remaining graph, we computed the \emph{co-winners} given by the top-cycle rule, and we launched a feasibility problem asking the computer to build an instance without equilibrium. This is similar in spirit to the ones used in previous sections, but including additional decision variables for the tie-breaking ordering (and making sure that winners are indeed among the co-winners). 
For the six candidate case, the infeasibility of the program tells us that an equilibrium always exists, but we could not extract any 
readable proof from the result.\footnote{Note that this positive result holds as well for the Banks rule, since Top-Cycle and Banks do coincide up to six candidates \cite{BDS2015}.} 
The counterexample for seven candidates is given in Appendix A.

\subsection{More negative results by induction}

For all rules that satisfy IBC and for which we have already found a counter-example for $m$, we know that counterexamples exist for any number of candidates. As we previously noted, veto is an example of a rule not satisfying IBC, but an adapted version of Lemma  \ref{lemma-more-candidates} can easily be designed (see Lemma \ref{lemma-more-candidates-veto} in Appendix).  
As a corollary of these, and of Propositions \ref{saari4}, \ref{pro-stv4}, \ref{prop:maximin-5cand}, \ref{prop:borda-5cand} and \ref{prop:topcycle-7cand} we get:

\begin{corollary}~ There exists profiles with no NE in the following cases:
\begin{itemize}
\item For all Saari scoring rules satisfying IBC (including plurality),
 as well as for veto, for all $m \geq 4$.
\item For plurality with runoff and single transferable vote, for all $m \geq 4$.
\item For Borda, maximin, and the uncovered set, for all $m \geq 5$.
\item For TopCycle, and for all $m \geq 7$.
\end{itemize}
\end{corollary}



\section{Strong Equilibria and Link to Control}\label{sec:control}

\subsection{Strong Nash Equilibria}

We now prove that the lack of guarantee for the existence of strong Nash equlilibria holds for almost any voting rule and any number of candidates $m \geq 3$. 

Let $r$ be a voting rule defined for a varying set of candidates $Y \subseteq X$. 
We say that $r$  is {\em majority-extending} if for any $Y \subseteq X$ such that $|Y| = 2$ and if the two candidates in $Y$ are not tied in $P^{\downarrow Y}$ then $r(P^{\downarrow Y})$ is the majority winner in $P^{\downarrow Y}$ (in case of a tie, we don't need to specify the outcome).


\begin{proposition}
There does not exist any majority-extending and IBC rule  that guarantees the existence of an SE at every profile. 
\end{proposition}

\begin{proof}
Let $r$ be a majority-extending and IBC rule.
Consider the following 3-voter, $k+3$-candidate profile $(k \geq 0$):

\begin{footnotesize}
\[
\begin{array}{|ccc|}
\hline
1 & 1 & 1 \\
\hline
a & b & c \\
b & c &  a \\
c & a &  b \\
x_1 & x_1 & x_1 \\
\vdots & \vdots & \vdots \\
x_k & x_k & x_k \\
\hline
\end{array} 
\]
\end{footnotesize}


By a repeated application of IBC, for any nonempty $Y \subseteq \{a,b,c\}$ and any $Z \subseteq \{x_1, \ldots, x_k\}$ we have $r(P^{\downarrow Y \cup Z}) = r(P^Y)$. 

We already know that $r(P^{\downarrow \{a,b,c, x_1, \ldots, x_k\}} \in \{a,b,c\}$. 
Without loss of generality, assume that $r(P^{\downarrow \{a,b,c, x_1, \ldots, x_k\}}) = a$. For any $Z \subseteq \{x_1, \ldots, x_k\}$, by IBC and majority-extension, the resulting choice function must be: 

$$abcZ \mapsto a ; abZ \mapsto a ; bcZ \mapsto b ; acZ \mapsto c ; aZ \mapsto a ; bZ \mapsto b ; cZ \mapsto c $$



But then, given the candidates' preferences, for any $Z \subseteq \{x_1, \ldots, x_k\}$ we have: 

\begin{itemize}
\item  $abcZ$ is not an SE: $abcZ \mapsto a$, $b$ leaves $\mapsto c$
\item  $abZ$ is not an SE: $abZ \mapsto a$, $b$ leaves and $c$ joins $\mapsto c$
\item  $acZ$ is not an SE: $acZ \mapsto c$, $a$ leaves and $b$ joins $\mapsto b$
\item  $bcZ$ is not an SE: $bcZ \mapsto b$, $a$ joins $\mapsto a$
\item  $aZ$ is not an SE: $aZ \mapsto a$, $c$ joins $\mapsto c$
\item  $bZ$ is not an SE: $bZ \mapsto b$, $a$ joins $\mapsto a$
\item  $cZ$ is not an SE: $cZ \mapsto c$, $b$ joins $\mapsto b$
\item  $Z$ is not an SE: any of $a$, $b$ or $c$ wants to join. \qed
\end{itemize}
\end{proof}

The result applies to most common voting rules.\footnote{A noticeable exception is veto;
however, we already know that for veto,
there exist profiles without NE, and therefore without SE.}

\subsection{Relation to Control}


Bartholdi et al.~\cite{BartholdiToveyTrick92} define
 \emph{constructive control by deleting candidates} (CCDC) and \emph{constructive control by adding candidates} (CCAC): 
an instance of CCDC 
consists of a profile $P$ over set of candidates $C$, a distinguished candidate $c$, an integer $k$, and we ask whether there is a subset $C'$ of $C$ with $|C \setminus C'| \leq k$ such that $c$ is the unique winner in $C'$. An instance of CCAC consists of a profile $P$ over set of candidates $C_1 \cup C_2$, a distinguished candidate $c$, and we ask whether there is a subset $C'$ of $C_2$ such that the unique winner in $C_1 \cup C'$ is $c$.
{\em Destructive} versions of control are defined by Hemaspaandra et al.~\cite{HemaspaandraHR07}: \emph{destructive control by deleting} (DCDC) is similar to CCDC, except that we ask whether there is a subset $C'$ of $C \setminus \{c\}$ with $|C \setminus C'| \leq k$ such that $c$ is {\em not} the unique winner in $C \setminus C'$; and  
\emph{destructive control by adding candidates} (DCAC) is similar to CCAC, except that $c$ should {\em not} be the unique winner in $C'$.  There are also {\em multimode} versions of control~\cite{FHH11}: e.g., CC(DC+AC) allows the chair to delete some candidates {\em and} to add some others (subject to some cardinality constraints).

Nash equilibria and strong equilibria in strategic candidacy relate to a slightly more demanding notion of control, which we can call {\em consenting control}, and that we find an interesting notion {\em per se}. In traditional control, candidates have no preferences and no choice---the chair may add or delete them as he likes. An instance of {\em consenting CCDC} consists of an instance of CCDC plus, for each candidate in $C$, a preference ranking  over $C$, and we ask whether there is a subset $C'$ of $C$ with $|C \setminus C'| \leq k$ such that $c$ is the unique winner in $C'$, and every candidate in $C \setminus C'$ prefers $c$ to the candidate which would win if all candidates in $C$ were running. An instance of {\em consenting CCAC} consists of an instance of CCAC plus, for each candidate in $C_2$, a ranking over $C_1 \cup C_2$, and we ask whether there is a subset $C'$ of $C_2$ such that $c$ is the unique winner in $C_1 \cup C'$  and every candidate in $C'$ prefers $c$ to the candidate which would win if only the candidates in $C_1$ were running. Consenting versions of destructive control 
are defined similarly: here the goal is to have a different candidate from the current winner elected. 

Clearly, for profile $P$,  $(1,\ldots,1)$ is an SE iff there is no consenting destructive control by removing candidates against the current winner $r(X)$, with the value of $k$ being fixed to $m$ (the chair has no limit on the number of candidates to be deleted; the limits come here from the fact that the candidates must consent), and $(1,\ldots,1)$ is an NE iff there  is no consenting destructive control by removing candidates against the current winner $r(X)$, with the upper bound of $k = 1$ on the number of candidates to be deleted. 

For candidate sets that are different from the set $X$ of all candidates (as some may leave and some other may join), we have to resort to consenting destructive control by removing {\em and}  adding candidates, as in \cite{FHH11}. Let $s$ be a state and $X_s$ the set of running candidate in $s$:  $s$ is an SE if there is no consenting destructive control by removing and adding candidates against the current winner $r(X_s)$, without any constraint on the number of candidates to be removed or added. For an NE, this is similar, but with the bound $k = 1$ on the number of candidates to be deleted or added.


\section{Conclusions}\label{sec:further}

We have explored further the landscape of strategic candidacy in elections by obtaining several positive results
 and several negative results 
which can be summarized on the following table, where ``yes$^*$'' means yes under the assumption that $n$ is odd, or more generally that pairwise ties do not occur.

\begin{footnotesize}
\begin{center}
\begin{tabular}{c|c|c|c|c}
& 3 & 4 & 5-6 & $\geq$ 7\\ \hline
plurality & yes & no & no & no \\
veto & yes & no & no & no \\
pl. runoff & yes & no & no & no \\
STV & yes & no & no & no \\
Borda & yes & yes & no & no \\
maximin & yes & yes$^*$ & no & no \\
UC & yes & yes$^*$ & no & no \\
TC & yes & yes$^*$ & yes$^*$ & no \\
Copeland& yes & yes$^*$ & yes$^*$ & yes$^*$ \\
\hline
\end{tabular}
\end{center}
\end{footnotesize}

An important issue for further research is a characterization of all rules for which the existence of a pure Nash equilibrium is guaranteed, at least for an odd number of voters. We know that not only it contains Copeland, as well as the rule defiend by the sophisticated winner of the successive elimination rule; these two rules do not have much in common, which suggests that such a characterization could be highly complex.

Another issue is the study of the set of states that can be reached by some (\emph{e.g.} best response) dynamics starting from the set or all potential candidates. In some cases, even when the existence of NE is guaranteed (\emph{e.g.} for Copeland), we could already come up with examples such that none is reachable by a sequence of best responses. But other types of dynamics may be studied. Another issue for further research is the computational complexity of deciding whether there is an NE or SE. 

Finally, a recent line of research, dealing with a setting where not only candidates, but also voters, are strategic players, has been investigated by Brill and Conitzer \cite{BrillConitzerAAAI2015}. 


\section*{Acknowledgements} We would like to thank Markus Brill, Edith Elkind, Michel Le Breton and Vincent Merlin for helpful discussions.

{\small 
\bibliographystyle{abbrv}

}




\section*{Appendix A}

\noindent {\bf Proposition \ref{prop:CondorcetWinner}}
{\em Let $\Gamma=\langle X, P, r, P^X\rangle$ be a candidacy game where $r$ is Condorcet-consistent. If $P$ has a Condorcet winner $c$ then for any $Y \subseteq X$, 
\begin{center}$Y$ is a SE $\Leftrightarrow$ $Y$ is an NE $\Leftrightarrow$ $c \in Y$.\end{center}}
\begin{proof}
Assume $c$ is a Condorcet winner for $P$ and let $Y \subseteq X$ such that $c \in Y$. Because $r$ is Condorcet-consistent, and because $c$ is a Condorcet winner for $P^{\downarrow Y}$, we have $r\left(P^{\downarrow Y}\right) = c$. Assume $Z=Z^+\cup Z^-$ is a deviating coalition from $Y$, with $Z^+$ the candidates who join and $Z^-$ the candidates who leave the election. Clearly, $c \notin Z$, as $c \in Y$ and $c$ has no interest to leave. Therefore, $c$ is still a Condorcet winner in $P^{\downarrow (Y \setminus Z^-) \cup Z^+}$, which by the Condorcet-consistency of $r$ implies that $r\left( P^{\downarrow(Y \setminus Z^-) \cup Z^+}\right) = c$, which contradicts the assumption that $Z$ wants to deviate. We thus conclude that $Y$ is an SE, and {\em a fortiori} an NE.
Finally, let $Y \subseteq X$ such that $c \notin Y$. Then, $Y$ is not an NE (and {\em a fortiori} not an SE), because $c$ has an interest to join the election.
\qed
\end{proof}

\noindent {\bf Proposition \ref{prop:maximin-5cand}}
{\em For maximin and the uncovered set, with five candidates, there are profiles with no NE.}
\begin{proof}
For maximin, a counterexample is the following weighted majority graph along with the candidates' preference profile. 
The tie-breaking priority is lexicographic. 

\[
\begin{array}{|l|lllll}
\hline
& a & b & c & d & e \\
\hline
a & - & 1 & 4 & 2 & 3 \\
b & 4 & - & 1 & 4 & 3 \\
c & 1 & 4 & - & 2 & 2 \\
d & 3 & 1 & 3 & - & 0 \\
e & 2 & 2 & 3 & 5 & - \\
\hline
\end{array}
\begin{array}{|ccccc|}
\hline
a & b & c & d & e \\
\hline
a & b & c & d & e\\
c & e & d & a & b\\
b & c & a & c & a \\
e & a & e & b & d \\
d & d & b & e & c \\
\hline
\end{array}
\]
\noindent

Below we give all 31 states, with the usual notation.

$$\begin{array}{ccccc}
\begin{array}{cc}
{\mathbf a} & b+ \\
{\mathbf b} & c+ \\
{\mathbf c} & a+ \\
{\mathbf d} & b+ \\
{\mathbf e} & a+ 
\end{array}
&
\begin{array}{cc}
a {\mathbf b} & c+ \\
{\mathbf a}  c & d+ \\
a {\mathbf d} & b+ \\
{\mathbf a} e & b+ \\
b {\mathbf c} & a+ \\
{\mathbf b} d & c+\\
{\mathbf b} e & c+\\
c {\mathbf d} & b+\\
c{\mathbf e} & a+\\
d {\mathbf e} & a+\\
\end{array}
&
\begin{array}{cc}
{\mathbf a} bc & e+ \\
a{\mathbf b}d & c+ \\
a{\mathbf b}e& c+ \\
ac{\mathbf d} & b+ \\
{\mathbf a}ce & b+ \\
{\mathbf a}de & b+ \\
b{\mathbf c}d & a+\\
b {\mathbf c}e & b-\\
{\mathbf b}de & c+\\
cd{\mathbf e} & a+\\
\end{array}
&
\begin{array}{cc}
{\mathbf a} bcd & e+ \\
abc{\mathbf e} & a- \\
a{\mathbf b} de& c+ \\
{\mathbf a}cde & b+ \\
b{\mathbf c}de & b- \\
\end{array}
&
\begin{array}{cc}
abcd{\mathbf e} & a- 
\end{array}
\end{array}
$$

Here is now a counter-example for the uncovered set rule. The tie-breaking rule is
$a \tb  b \tb  d \tb  c \tb  e$.

\begin{minipage}{5cm}
\[
\begin{array}{|l|lllll}
\hline
& a & b & c & d & e \\
\hline
a & 0 & 0 & 0 & 1 & 0 \\
b & 1 & 0 & 1 & 0 & 0 \\
c & 1 & 0 & 0 & 1 & 0 \\
d & 0 & 1 & 0 & 0 & 1 \\
e & 1 & 1 & 1 & 0 & 0 \\
\hline
\end{array}
\begin{array}{|ccccc|}
\hline
a & b & c & d & e \\
\hline
a & b & c & d & e\\
e & e & e & b & b\\
c & c & d & a & a \\
b & a & a & e & c \\
d & d & b & c & d \\
\hline
\end{array}
\]
\end{minipage}
\begin{minipage}{6cm}
$$\begin{array}{ccccc}
\begin{array}{cc}
{\mathbf a} & b+ \\
{\mathbf b} & d+ \\
{\mathbf c} & b+ \\
{\mathbf d} & a+ \\
{\mathbf e} & d+ 
\end{array}
&
\begin{array}{cc}
a {\mathbf b} & e+ \\
a {\mathbf c} & b+ \\
{\mathbf a} d & c+ \\
a {\mathbf e} & d+ \\
{\mathbf b}c & e+ \\
b {\mathbf d} & a+\\
b {\mathbf e} & d+\\
{\mathbf c}d & b+\\
c{\mathbf e} & d+\\
{\mathbf d}e & a+\\
\end{array}
&
\begin{array}{cc}
a {\mathbf b}c & e+ \\
{\mathbf a}bd & d- \\
ab{\mathbf e}& d+ \\
a{\mathbf c}d & b+ \\
ac {\mathbf e} & d+ \\
{\mathbf a}de & c+ \\
{\mathbf b}cd & c-\\
b c{\mathbf e} & d+\\
b{\mathbf d}e & a+\\
c{\mathbf d}e & e-\\
\end{array}
&
\begin{array}{cc}
a {\mathbf b}cd & a- \\
abc{\mathbf e} & d+ \\
{\mathbf a}bde& c+ \\
ac{\mathbf d}e & e- \\
bc{\mathbf d}e & e- \\
\end{array}
&
\begin{array}{cc}
abc{\mathbf d}e & e- 
\end{array}
\end{array}
$$
\end{minipage}

\qed
\end{proof}

\noindent {\bf Proposition \ref{prop:topcycle-7cand}}
{\em For the Top-Cycle rule and seven candidates, there are profiles with no NE.}

\begin{proof}
We give the majority graph, 
tie-breaking relation, and the (partially specified) candidates' preferences. 
The tie-breaking relation is $a \tb g \tb c \tb b \tb d \tb e \tb f$.


\begin{minipage}{5cm}
\begin{tikzpicture}[scale=0.5]
\tikzstyle{is-in}=[circle,draw, line width=1pt]
\tikzstyle{is-out}=[circle,draw, line width=1pt,color=gray]

\node (A) at (0,3) [is-in,label=above:\footnotesize{$a$}, label=below:\footnotesize{}] {};
\node (B) at (3,6) [is-in,label=above:\footnotesize{$b$}, label=below:\footnotesize{}] {};
\node (C) at (6,6) [is-in,label=above:\footnotesize{$c$}, label=below:\footnotesize{}] {};
\node (D) at (9,4.5) [is-in,label=above:\footnotesize{$d$}, label=above:\footnotesize{}] {};
\node (E) at (9,1.5) [is-in,label=below:\footnotesize{$e$}, label=below:\footnotesize{}] {};
\node (F) at (6,0) [is-in,label=below:\footnotesize{$f$}, label=below:\footnotesize{}] {};
\node (G) at (3,0) [is-in,label=below:\footnotesize{$g$}, label=below:\footnotesize{}] {};


\draw [->] (A) -- (B);
\draw [->] (B) -- (C);
\draw [->] (C) -- (D);
\draw [->] (D) -- (E);
\draw [->] (E) -- (F);
\draw [->] (F) -- (G);
\draw [->] (G) -- (A);

\draw [->] (B) -- (D);
\draw [->] (B) -- (G);
\draw [->] (C) -- (A);
\draw [->] (C) -- (E);
\draw [->] (D) -- (A);
\draw [->] (D) -- (G);
\draw [->] (E) -- (A);
\draw [->] (E) -- (B);
\draw [->] (E) -- (G);
\draw [->] (F) -- (A);
\draw [->] (F) -- (B);
\draw [->] (F) -- (C);
\draw [->] (F) -- (D);
\draw [->] (F) -- (G);
\draw [->] (G) -- (C);

\end{tikzpicture}
\end{minipage}
~
%
%
%
\begin{minipage}{5cm}
\[
\begin{array}{|ccccccc|}
\hline
a & b & c & d & e & f & g \\
\hline
a & b & c & d & e & f & g \\
 & c & d & c & g &  &  \\
 & f & b & f & d &  & \\
 & g & a & g & a &  & \\
 & a & f & b & b  &  & \\
 & d & e & e & c &  & \\
 & e & g & a & f &  & \\

\hline
\end{array}
\]
\end{minipage}


\small
$$\begin{array}{ccccccc}
\begin{array}{cc}
{\mathbf a} & g+ \\
{\mathbf b} & a+ \\
{\mathbf c} & f+ \\
{\mathbf d} & c+ \\
{\mathbf e} & c+ \\
{\mathbf f} & e+ \\
{\mathbf g} & e+ 

\end{array}
&
\begin{array}{cc}
{\mathbf a}b & f+ \\
a {\mathbf c} & f+ \\
a{\mathbf d} & c+ \\
a {\mathbf e} & d+ \\
a {\mathbf f} & e+ \\
a {\mathbf g} & e+ \\
{\mathbf b}c & a+ \\
{\mathbf b}d & a+\\
b {\mathbf e} & c+\\
b {\mathbf f} & e+\\
{\mathbf b}g & f+\\
{\mathbf c}d & g+\\
{\mathbf c}e & g+\\
{\mathbf c}f & e+\\
c{\mathbf g} & f+\\
{\mathbf d}e & c+\\
d{\mathbf f} & e+\\
{\mathbf d}g & f+\\
{\mathbf e}f & c+\\
{\mathbf e}g & d+\\
{\mathbf f}g & e+\\
\end{array}
& 
\begin{array}{cc}
{\mathbf a}bc & f+ \\
{\mathbf a}bd & f+ \\
ab{\mathbf e} & c+ \\
ab{\mathbf f} & e+ \\
{\mathbf a}bg & f+ \\
a{\mathbf c}d & f+ \\
a{\mathbf c}  e & g+ \\
ac{\mathbf f} & e+ \\
ac{\mathbf g} & f+ \\
a{\mathbf d}e & c+ \\
ad{\mathbf f} & e+ \\
a{\mathbf d}g & f+ \\
a{\mathbf e}f & d+ \\
a{\mathbf e}g & d+ \\
a{\mathbf f}g & e+ \\
{\mathbf b}cd & f+\\
b{\mathbf c}e & g+\\
bc{\mathbf f} & e+\\
{\mathbf b}cg & f+\\
{\mathbf b}de & c+\\
bd{\mathbf f} & e+\\
{\mathbf b}dg & a+\\
b{\mathbf e}f & c+\\
b{\mathbf e}g & d+\\
b{\mathbf f}g & e+\\
{\mathbf c}de & g+\\
cd{\mathbf f} & e+\\
cd{\mathbf g} & c-\\
{\mathbf c}ef & g+\\
ce{\mathbf g} & c-\\
c{\mathbf f}g & d+\\
{\mathbf d}ef & b+\\
{\mathbf d}eg & b+\\
 d{\mathbf f}g & e+ \\
{\mathbf e}fg & d+\\
\end{array}
& 
\begin{array}{cc}
{\mathbf a}bcd & f+ \\
{\mathbf a}bce & b- \\
abc{\mathbf f} & e+ \\
{\mathbf a}bcg & f+ \\
{\mathbf a}bde & d- \\
abd{\mathbf f} & e+ \\
{\mathbf a}bdg & f+ \\
ab{\mathbf e}f & c+ \\
ab{\mathbf e}g & c+ \\
ab{\mathbf f}g & e+ \\
a {\mathbf c}de & g+ \\
acd{\mathbf f} & e+ \\
acd{\mathbf g} & c- \\
a {\mathbf c}ef & g+ \\
ace{\mathbf g} & c- \\
ac{\mathbf f}g & e+ \\
a {\mathbf d}ef & c+ \\
a {\mathbf d}eg & b+ \\
ad{\mathbf f}g & e+ \\
a {\mathbf e}fg & d+ \\

b {\mathbf c}de & e- \\
bcd{\mathbf f} & e+ \\
{\mathbf b}cdg & a+ \\
b {\mathbf c}ef & g+ \\
bce{\mathbf g} & a+ \\
bc{\mathbf f}g & e+ \\
{\mathbf b}def & a+ \\
{\mathbf b}deg & a+ \\
bd{\mathbf f}g & e+ \\
b {\mathbf e}fg & d+ \\

{\mathbf c}def & g+ \\
cde{\mathbf g} & c- \\
cd{\mathbf f}g & e+ \\
cef{\mathbf g} & c- \\
{\mathbf d}efg & b+ \\

\end{array}
& 
\begin{array}{cc}
{\mathbf a}bcde & b- \\ 
abcd{\mathbf f} & e+ \\
{\mathbf a}bcdg & f+ \\
{\mathbf a}bcef & b- \\
{\mathbf a}bceg & b- \\
abc{\mathbf f}g & e+ \\
{\mathbf a}bdef & d- \\
{\mathbf a}bdeg & d-  \\
abd{\mathbf f}g & e+ \\
ab{\mathbf e}fg & c+  \\

a{\mathbf c}def & g+  \\
acde{\mathbf g} & c- \\
acd{\mathbf f}g & e+ \\
acef{\mathbf g} & c- \\
a{\mathbf d}efg & b+ \\
b{\mathbf c}def & g+ \\
bcde{\mathbf g} & c-  \\
bcd{\mathbf f}g & e+ \\
bcef{\mathbf g} & c-  \\

{\mathbf b}defg & a+ \\
cdef{\mathbf g} & c- \\

\end{array}

& 
\begin{array}{cc}
{\mathbf a}bcdef & b-  \\ 
{\mathbf a}bcdeg & b- \\ 
abcd{\mathbf f}g & e+ \\ 
{\mathbf a}bcefg & b- \\ 
{\mathbf a}bdefg & d-  \\ 
 acdef{\mathbf g} & c-  \\ 
bcdef{\mathbf g} & c-  \\ 
\end{array}

& 
\begin{array}{cc}
{\mathbf a}bcdefg & b-  \\ 
\end{array}

\end{array}
$$
\normalsize

\qed
\end{proof}

\begin{lemma}\label{lemma-more-candidates-veto}
For the veto rule $r$, if there exists $\Gamma = \langle X, P, r, P^X\rangle$ with no NE, then there exists $\Gamma' = \langle X', P', r, P^{X'}\rangle$ with no NE, where $|X'| = |X| + 1$.
\end{lemma}

\begin{proof}
Take  $\Gamma$ with no NE, with $X = \{x_1, \ldots, x_m\}$, and $n$ voters $n$ odd. Let $s(P,Y,x_i)$ denote the veto score of $x_i$ in $P^{\downarrow Y}$.
Let $X' = X \cup \{x_{m+1}\}$, and $Q$ be the following $3n$-voter profile: for each vote $P_i$ in $P$ we have two identical votes $Q_i$, $Q_i'$ obtained from $P_i$ by adding $x_{m+1}$ in the bottom position, and one vote $Q_i''$ obtained from $P_i$ by adding $x_{m+1}$ in the top position.
Finally, let $P^{X'}$ be the candidate profile obtained by adding $x_{m+1}$ at the bottom of every ranking of a candidate $x_i$, $i < m$, and whatever ranking for $x_{m+1}$. Let $\Gamma' = \langle X', P', r, P^{X'}\rangle$

Let $Y \subseteq X$ and  $Y' = Y \cup \{x_{m+1}\}$.

For all $x_i \in Y$, $s(Q,Y,x_i) = 3s(P,Y\setminus \{x_{m+1} \},x_i)$; therefore, $r(Q^{\downarrow Y}) = r(P^{\downarrow Y})$. Because $Y$ is not an NE for $\Gamma$, some candidate $x_i \in X$ has a profitable deviation from $Y$ in $\Gamma$, thus $x_i$ has a profitable deviation in from $Y$ in $\Gamma'$ too: $Y$ is not an NE in $\Gamma'$.

For all $x_i \in Y$, $s(Q,Y',x_i) = s(P,Y\setminus \{x_{m+1} \},x_i) + 2n \geq 2n$, while $s(Q,Y',x_{m+1}) = n$; therefore, $r(Q^{\downarrow Y'}) = r(P^{\downarrow Y})$, and a profitable deviation from $Y$ in $\Gamma$ is also a profitable deviation in from $Y'$ in $\Gamma'$ too: $Y'$ is not an NE in $\Gamma'$.
\qed

\end{proof}

\section*{Appendix B: ILP formulation}

Let $S$ be the set of states, and $A(s)$ be the set of agents who are candidates in state $s$.  Note that $|S| = 2^{|X|}$.

\paragraph{Choice functions without any NE.} We introduce a binary variable $w_{si}$, meaning that agent $i$ wins in state $s$. 
We  add constraints enforcing that there is a single winner in each state $s$ :
\begin{eqnarray}
\forall i\in X, \forall s\in S : & w_{s,i}\in\{0,1\}
\label{eqC1}\\
\forall s\in S : & \sum_{i\in X}{w_{s,i}}=1
\label{eqC2}\\
\forall s\in S, \forall i \in X \not \in A(s) : & {w_{s,i}}=0
\label{eqC3}
\end{eqnarray}

Now, we introduce constraints related to deviations. We denote by $D(s)$ the set of \emph{possible deviations} from state $s$ (state where a single agent's candidacy differs from $s$).  We also denote by $a(s,t)$ an agent potentially deviating from $s$ to $t$. 
We define binary variables $d_{s,t}$ indicating a deviation from a state $s$ to a state $t$.  
In each state, there must be at least one deviation otherwise this state must be a NE. 
\begin{eqnarray}
\forall s\in S, \forall t\in S : & d_{s,t}\in\{0,1\} \label{eqC4}\\
\forall s\in S : &  \sum_{t\in D(s)}{d_{s,t}} \geq 1 \label{eqC5}
\end{eqnarray}

Now we introduce constraints related to the preferences of the candidates. 
For this purpose, we introduce a binary variable $p_{i,j,k}$, meaning that agent $i$ prefers candidate $j$ over candidate $k$. 
If there is indeed a deviation from $s$ to $t$, the deviating agent must prefer the winner of the state new state compared to the winner of the previous state: 
\begin{eqnarray}
\forall s\in S, \forall t\in D(s), \forall i\in X, \forall j \in X : & w_{s,i} + w_{t,j} + d_{s,t} - p_{a(s,t),j,i} \leq 2 \label{eqC6}
\end{eqnarray}
Finally we ensure that the preferences are irreflexive and transitive, and respect the constraint of being self-supported. 
\begin{eqnarray}
  \forall i\in X, \forall j \in X : & p_{i,j,j} = 0 \label{eqC7}\\
  \forall a\in X, \forall i \in X \forall j\in X, \forall k \in X : & p_{a,i,j} + p_{a,j,k} - p_{a,i,k}  \leq 1 \label{eqC8}\\
 \forall i\in X, \forall j \in X : & p_{i,i,j} = 1 \label{eqC9}
\end{eqnarray}

\paragraph{Constraints for Borda.}
We introduce a new integer variable $N_{i,j}$ to represent the number of voters preferring $i$ over $j$ in the weighted tournament. 
We first make sure that the values of $N_{i,j}$ are consistent throughout the weighted tournament. 
\begin{eqnarray}
\forall i\in X, \forall j\in X, \forall k\in X, \forall l\in X : & N_{i,j} + N_{j,i} = N_{k,l} + N_{l,k}
\label{eqB1}
\end{eqnarray}

In each state, when agent $i$ wins, we must make sure that her total amount of points is the highest among all agents in this state (note that $i$ can simply tie with agents it has priority over in the tie-breaking; we omit this for the sake of readability): 
\begin{eqnarray}
&& \forall s\in S,  \forall i \in A(s), \forall k \in A(s) \setminus \{i\} :  \nonumber\\
& &  (1-w_{s,i}) \times M + \sum_{j\in A(s) \setminus \{i\}} {N_{i,j}} > \sum_{j\in A(s) \setminus \{k\}} {N_{k,j}}
\label{eqB2}
\end{eqnarray}
Here $M$ is an arbitrary large value, used to relax the constraint when $w_{s,i}$ is 0.

\end{document}